\definecolor{lightgray}{gray}{0.9}
\begin{document}

\title{\bf  A ranked-based estimator of the mean past lifetime with its application }
\author{
 \textbf{ Elham Zamanzade, Majid Asadi, Afshin Parvardeh, and Ehsan Zamanzade }\\
%%EndAName
{\small Department of Statistics, Faculty of Mathematics and Statistics, }\\ 
{\small University of Isfahan, Isfahan 81746-73441, Iran.}
}
\date{}

\maketitle

\begin{abstract}
{    The mean past lifetime (MPL) is an important tool in reliability and survival analysis for measuring the average time elapsed since  the occurrence of an event,  under the condition  that  the event has occurred before a specific time $t>0$.   This article develops a nonparametric estimator for MPL based on observations collected according to ranked set sampling (RSS) design. It is shown that the  estimator that we have developed is a strongly uniform consistent. It is also proved that the introduced estimator tends to a Gaussian process under some mild conditions. A Monte Carlo simulation study is employed to evaluate the performance of the proposed estimator with its competitor in simple random sampling (SRS).  Our findings show the introduced estimator is more efficient than its counterpart estimator in SRS  as long as the quality of ranking is better than random. Finally, an illustrative example is provided to describe the potential application of the developed estimator in assessing   the average time between the infection and diagnosis in  HIV patients.
  }
\end{abstract}

\noindent \textbf{Keywords:} Asymptotic Gaussian;  Ranked Set Sampling; Judgement Ranking \newline
\noindent \textbf{Mathematics Subject Classifications 2020:} 62D05; 62G10 %
\newtheorem{theorem}{Theorem} %
\newtheorem{acknowledgement}[theorem]{Acknowledgement} %
\newtheorem{algorithm}[theorem]{Algorithm} \newtheorem{axiom}[theorem]{Axiom}\newtheorem{case}[theorem]{Case} \newtheorem{claim}[theorem]{Claim} %
\newtheorem{conclusion}[theorem]{Conclusion} %
\newtheorem{condition}[theorem]{Condition} %
\newtheorem{conjecture}[theorem]{Conjecture} %
\newtheorem{corollary}[theorem]{Corollary} %
\newtheorem{criterion}[theorem]{Criterion} %
\newtheorem{definition}[theorem]{Definition} %
\newtheorem{example}[theorem]{Example} %
\newtheorem{exercise}[theorem]{Exercise} \newtheorem{lemma}[theorem]{Lemma} %
\newtheorem{notation}[theorem]{Notation} %
\newtheorem{problem}[theorem]{Problem} %
\newtheorem{proposition}[theorem]{Proposition} %
\newtheorem{remark}[theorem]{Remark} \newtheorem{solution}[theorem]{Solution}
\newtheorem{summary}[theorem]{Summary}

\newpage
\section{Introduction}\label{Sec1}
The mean past lifetime (MPL) of a nonnegative random variable $X$  is defined as
\begin{align}  \label{kt}
K(t)\doteq \mathbb{E}\left(t-X|X\leq t \right),
\end{align}
and the functional form of the MPL function $K(t)$ in terms of the CDF $F$ is given by
\begin{align}  \label{Kt}
K(t)&=\dfrac{\int_{0}^t F(x)dx}{F(t)}, \quad t>0,
\end{align}
provided that $F(t)>0$.   \cite{Asadi2}  investigated some of MPL properties in connection with other reliability measures and discussed its estimation based on simple random sampling (SRS).  \cite{Parvardeh} studied the asymptotic behaviour of  the empirical estimator of MPL function based on SRS.

Ranked set sampling (RSS), introduced by \cite{McIntyre}, is a useful sampling technique when exactly quantifying  sample units is relatively difficult (expensive, destructive, or time consuming) but one can use prior information to rank sample units in a small set, without their actual quantifications. The prior information can be taken as judgement ranking mechanism and be result of a visual comparison, an expert opinion or a concomitant variable. To obtain a ranked set sample, one first determines the parameters $k$ and $m$, referred as the set size and cycle size, respectively. One then draws $k$  simple random samples (sets) of size $k$ from population of interest. In the first set, the unit that judged to be smallest  is maintained and measured ($X_{[1]1}$). In the second set, the unit that judged to be second smallest among $k$ units is maintained and measured ($X_{[2]1}$). This process continues until the unit that judged to be largest is maintained and measured ($X_{[k]1}$). These measured values $X_{[r]1},\ r=1,\ldots,k$, construct the first cycle of the sampling. For increasing the size of sampling, this process is replicated $m$ times (cycles) to reach a ranked set sample of size $n=m \times k$, represented by $\lbrace X_{[r]j}, r=1,\ldots,k, j=1,\ldots,m \rbrace $. We call a ranking procedure \textit{consistent} if  
\begin{eqnarray}
F\left(t\right)=\frac{1}{k}\sum_{r=1}^k F_{[r]}\left(t\right),
\label{fundamental}
\end{eqnarray} 
where  $F_{[r]}$ is the CDF a sample unit in a set of size $k$ which  judged to have rank $r$. \cite{Presnell}  showed that the \textit{consistent} assumption in ranking process holds under some mild conditions. Note that if the ranking processes leads to no error, then the distribution of $X_{[r]j}$ is the similar to the distribution of the $r$th order statistic of a sample of size $k$ (for $j=1,\ldots,m$) and the equality (\ref{fundamental}) obviously holds owing to binomial expansion.

Although, at the beginning, \cite{McIntyre} considers the problem of mean estimation in  agricultural context, through the years, many standard statistical topics have been discussed in RSS and substantial works have been done in different fields. For example, the problem of estimation of the population mean is considered by \cite{Takahasi}. \cite{Stokes(1988)},  and \cite{Huang}  discussed cumulative distribution function (CDF) estimation in RSS. The problem of the variance and the proportion estimation in RSS are addressed by \cite{Stokes(1980)}, \cite{MacEachern(2002)} and  \cite{Chen (2007)}, \cite{Zamanzade&Mahdizadeh(2017)}, respectively. \cite{mahdi1} and \cite{mahdi2} discussed reliability estimation in RSS and \cite{Zamanzade(2019)} considered estimation of the mean residual life. \cite{Al-Omari}, \cite{Haq1}, \cite{Haq2}, \cite{Haq3} studied constructing statistical control charts using RSS. \cite{Samawi1}, and \cite{Samawi2} applied RSS on logisitic regression analysis. \cite{Chen2019}, \cite{Qian}, \cite{He} and \cite{He2021} studied parametric estimation in RSS, and \cite{wang1}, \cite{wang2} described how RSS can be used in multi-stage sampling designs. We refer the interested reader to recent survey paper of \cite{Wolfe (2012)}.

 In Section \ref{Sec2}, we propose an RSS-based estimator for MPL function  and investigate some of its asymptotic properties.  In Section \ref{Sec 3}, we compare the the estimator in RSS with its competitor in SRS. The comparison results show the preference of the introduced methodology. In Section \ref{Sec 4}, a potential application of the proposed method in practice is illustrated using a medical example. Section \ref{Sec 5} is devoted to some concluding remarks.

\section{The proposed estimator} \label{Sec2}
Let $X_1,\ldots, X_n$ be a simple random sample of size $n$ with  CDF function $F$. The empirical estimator of $K(t)$ based on a simple random sample of size $n$ is given by

\begin{align}  \label{Krss}
K_{SRS}(t)&=\dfrac{\sum_{r=1}^n(t-X_r)\mathbb{I}(X_r\leq t)}{\sum_{r=1}^n(X_r\leq t)%
},
\end{align}
where $\mathbb{I}\left(.\right)$ is the usual indicator function. 

The asymptotic behaviour of $K_{SRS}(t)$  is studied by \cite{Parvardeh}. He proved that as $n$ goes to infinity,
\begin{equation*}
\left\{ \sqrt{n}\left( K_{SRS}(t)-K(t)\right) ,\text{ }t\geq 0\right\}
\rightarrow Z\text{ \ \ \ }\ {\rm in} \text{ }D([\tau ,T])\text{\ \ \ }\text{for every \ } 0 <\tau<T<\infty,
\end{equation*}

where $D([\tau ,T])$ is the usual $D$ space on $[\tau ,T]$ with the
Skorokhod topology (\citealp{Billingsley}) and $Z=\{Z(t),$ $t\geq 0\}$ is a
mean zero Gaussian process with variance function $\sigma_{SRS}^{2}(t)=\sigma ^{2}(t)/F(t)$ where $\sigma ^{2}(t)=\mathbb{V}\left(t-X|X<t\right)=\dfrac{2\int_{0}^t(t-x)F(x)dx}{F(t)}-K^2(t)$.\\

 Let $\lbrace X_{[r]j}; r=1,\ldots,k; j=1, \ldots, m \rbrace$ be a  ranked set sample of size $n=mk$ from the population of interest. To estimate the parameter $K(t)$, one  can replace CDF $F$ in equation (\ref{Kt}) with its empirical counterpart,   $F_{RSS}\left(t\right)=\frac{1}{mk} \sum_{r=1}^k \sum_{j=1}^m \mathbb{I}\left(X_{[r]j}\leq t \right)$,   in RSS. This  leads to the following estimator
\begin{align}  \label{Krss}
K_{RSS}(t)&=W\sum_{r=1}^k U_r,
\end{align}
in which $W=\dfrac{\mathbb{I}(V_1+ \ldots +V_k>0)}{V_1+\ldots +V_k}$, $V_r=\sum_{j=1}^{m}\mathbb{I}(X_{[r]j}\leq t),$ and $U_r=\sum_{j=1}^{m}(t-X_{[r]j})\mathbb{I}(X_{[r]j}\leq t)$ for $r=1, \ldots, k$. The entire expression in
equation (\ref{Krss}) is taken to be $0$ if $\mathbb{I}(V_1+ \ldots +V_k>0)=0$, i.e.,
if $V_1+\ldots +V_k=0.$ Note that conditional on $\mathbf{V}=(V_1, \ldots,
V_k)$, the sums $U_r$, $r=1,\ldots,k$%
, are independent random variables. Also, for a fixed value of $r$, $%
U_r$ depends on $\mathbf{V}$ only
through $V_r$. Thus, it follows that
\begin{align*}
\mathbb{E}\left[ K_{RSS}(t)|\mathbf{V}\right] =W\sum_{r=1}^{k}\mathbb{E}\left[
U_r\vert V_r\right],
\end{align*}

\begin{align*}
\mathbb{V}\left[ K_{RSS}(t)|\mathbf{V}\right] =W^2\sum_{r=1}^{k}\mathbb{V}\left[
U_r\vert V_r\right].
\end{align*}

Let $K_{[r]}(t)\doteq \mathbb{E}\left( t-X_{[r]}|X_{[r]}\leq t\right)$ and $\sigma^2_{[r]}(t)$ denotes its variance. One can write

\begin{align*}
\left. \mathbb{E}\left[ K_{RSS}(t)\right\vert \mathbf{V}\right] =W\sum_{r=1}^{k}
V_r K_{[r]}(t),
\end{align*}

\begin{align*}
\left. \mathbb{V}\left[ K_{RSS}(t)\right\vert \mathbf{V}\right]
=W^2\sum_{r=1}^{k} V_r \sigma^2_{[r]}(t).
\end{align*}

Since $V_1, \ldots, V_k$ are independent random variables with $V_r\sim
Bin(m, F_{[r]}(t))$, the mean and the variance of $K_{RSS}(t)$ can be
obtained as

\begin{align}
\mathbb{E}\left[K_{RSS}(t)\right]=\left. \mathbb{E}\lbrace \mathbb{E}\left[K_{RSS}(t)\right\vert \mathbf{V}%
\right] \rbrace = \sum_{v_1+\ldots + v_k>0}\dfrac{p( \mathbf{v})}{v_1+\ldots
+ v_k}\sum_{r=1}^k v_r K_{[r]}(t),
\end{align}
where
\begin{equation*}
p(\mathbf{v})=\mathbb{P}\left( \mathbf{V}=\mathbf{v}\right)=\prod_{r=1}^{k}\binom{m}{%
v_{r}} F_{[r]}(t)^{v_{r}}\left[1-F_{[r]}(t) \right]^{m-v_{r}},
\end{equation*}
and

\begin{align*}
\mathbb{V}[K_{RSS}(t)]&=\left. \mathbb{E}\left\lbrace \mathbb{V}\left[K_{RSS}(t)\right\vert%
\mathbf{V}\right] \right\rbrace + \mathbb{V}\left\lbrace \left. \mathbb{E}\left[
K_{RSS}(t)\right\vert \mathbf{V}\right]\right\rbrace \\
&=\sum_{v_1+\ldots + v_k>0}\dfrac{p(\mathbf{v})}{(v_1+\ldots + v_k)^2}%
\sum_{r=1}^k v_r \sigma^2_{[r]}(t) \\
&\qquad+\sum_{v_1+\ldots + v_k>0}\dfrac{p(\mathbf{v})}{(v_1+\ldots + v_k)^2}%
\left\lbrace \sum_{r=1}^k v_r K_{[r]}(t)\right\rbrace ^2 \\
&\qquad -\left\lbrace \sum_{v_1+\ldots + v_k>0}\dfrac{p(\mathbf{v})}{%
(v_1+\ldots + v_k)} \sum_{r=1}^k v_r K_{[r]}(t)\right\rbrace ^2.\\
\end{align*}

The first theorem in this section establishes the consistency of $K_{RSS}(t)$.

\begin{theorem}
Let  $\lbrace X_{[r]j}; r=1,\ldots,k; j=1, \ldots, m \rbrace$ be a ranked set sample of size $n=mk$. Under consistent ranking assumption, if the set size $k$ is fixed and the number of cycles ($m$)
goes to infinity, then
\begin{align*}
\sup _{\tau<t<T}|K_{RSS}(t)-K(t)| \longrightarrow 0, \text{\ \ \ }\text{for every \ } 0 <\tau<T<\infty.
\end{align*}
\end{theorem}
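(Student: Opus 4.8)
The plan is to reduce the uniform consistency of $K_{RSS}(t)$ to the uniform consistency of the RSS empirical CDF $F_{RSS}(t)$ and of a related empirical functional. First I would observe that, writing $G_{RSS}(t)\doteq \frac{1}{mk}\sum_{r=1}^{k}\sum_{j=1}^{m}(t-X_{[r]j})\mathbb{I}(X_{[r]j}\leq t)$, one has the algebraic identity $K_{RSS}(t)=G_{RSS}(t)/F_{RSS}(t)$ on the event $\{F_{RSS}(t)>0\}$, while the target can be written $K(t)=G(t)/F(t)$ with $G(t)=\int_0^t F(x)\,dx$ (after an integration by parts of $\mathbb{E}[(t-X)\mathbb{I}(X\leq t)]$). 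So the problem decomposes into showing $\sup_{\tau<t<T}|F_{RSS}(t)-F(t)|\to 0$ and $\sup_{\tau<t<T}|G_{RSS}(t)-G(t)|\to 0$ almost surely, and then controlling the ratio on $[\tau,T]$ where $F(t)\geq F(\tau)>0$ is bounded away from zero.

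Second, for each fixed rank $r$, the variables $\{X_{[r]j}\}_{j=1}^{m}$ are i.i.d.\ with CDF $F_{[r]}$, so the classical Glivenko--Cantelli theorem gives $\sup_t|F_{[r],m}(t)-F_{[r]}(t)|\to 0$ a.s., where $F_{[r],m}$ is the within-stratum empirical CDF; averaging over the finitely many $r=1,\ldots,k$ and invoking the consistent ranking assumption $F=\frac{1}{k}\sum_r F_{[r]}$ yields $\sup_t|F_{RSS}(t)-F(t)|\to 0$ a.s. The same argument applied to the functions $x\mapsto (t-x)\mathbb{I}(x\leq t)$ — which, for $t$ ranging over $[\tau,T]$, form a uniformly bounded VC-type (monotone-in-$t$) class — gives a Glivenko--Cantelli-type uniform law for $G_{RSS}$, so $\sup_{\tau\leq t\leq T}|G_{RSS}(t)-G(t)|\to 0$ a.s.; alternatively one can get this by a direct $\varepsilon$-net argument on $[\tau,T]$ using monotonicity of $t\mapsto G_{RSS}(t)$ and continuity of $t\mapsto G(t)$, exactly as in the classical proof of Glivenko--Cantelli.

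Third, I would patch the two pieces together. On $[\tau,T]$ we have $\inf_{t}F(t)=F(\tau)=:c>0$, and since $F_{RSS}\to F$ uniformly a.s., for $m$ large enough $\inf_{\tau\leq t\leq T}F_{RSS}(t)\geq c/2>0$ a.s., so in particular the indicator $\mathbb{I}(F_{RSS}(t)>0)$ is eventually $1$ uniformly in $t\in[\tau,T]$ and the formula $K_{RSS}=G_{RSS}/F_{RSS}$ is in force. Then the elementary bound
\begin{align*}
\left|\frac{G_{RSS}(t)}{F_{RSS}(t)}-\frac{G(t)}{F(t)}\right|
\leq \frac{|G_{RSS}(t)-G(t)|}{F_{RSS}(t)}+\frac{G(t)\,|F(t)-F_{RSS}(t)|}{F(t)F_{RSS}(t)}
\end{align*}
combined with $F_{RSS}(t)\geq c/2$, $F(t)\geq c$, and $G(t)\leq T$ on $[\tau,T]$ shows $\sup_{\tau\leq t\leq T}|K_{RSS}(t)-K(t)|$ is bounded by a fixed multiple of $\sup|G_{RSS}-G|+\sup|F_{RSS}-F|$, which tends to $0$ a.s.

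The main obstacle, and the only place where some care beyond bookkeeping is needed, is the uniform (over $t\in[\tau,T]$) law of large numbers for $G_{RSS}(t)$: unlike $F_{RSS}$, the summands in $G_{RSS}(t)$ are not indicators, so one must either justify the Glivenko--Cantelli property for the relevant function class or run the monotone $\varepsilon$-net argument carefully; once that uniform convergence is secured, the rest is routine. It is also worth noting that the argument only uses the within-stratum i.i.d.\ structure and the consistent-ranking identity, so no assumption on ranking quality is required for consistency.
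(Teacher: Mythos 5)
Your proposal is correct, but it takes a genuinely different route from the paper. The paper proves this theorem by writing $K_{RSS}(t)-K(t)$ as a three-term decomposition through the within-stratum empirical MPL estimators $\hat{K}_{[r]}(t)$ and empirical CDFs $\hat{F}_{[r]}(t)$, and then handles the first and third terms by appealing to the SRS consistency result of Parvardeh (2015) applied to each rank stratum, and the middle term by Glivenko--Cantelli. You instead never introduce $\hat{K}_{[r]}$ at all: you treat numerator and denominator separately, proving $\sup_t|F_{RSS}(t)-F(t)|\to 0$ a.s.\ by stratum-wise Glivenko--Cantelli plus the consistent-ranking identity $F=\frac{1}{k}\sum_r F_{[r]}$, proving a uniform law for $G_{RSS}$ on $[\tau,T]$, and finishing with the elementary ratio bound on a region where $F\geq F(\tau)>0$. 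This buys self-containedness (no external SRS result is needed) and it also handles cleanly a point the paper glosses over, namely that the random denominator $F_{RSS}(t)$ (equivalently the indicator $\mathbb{I}(V_1+\cdots+V_k>0)$) is eventually bounded away from zero uniformly on $[\tau,T]$. One simplification you could make: the step you flag as the only delicate one, the uniform law for $G_{RSS}$, needs no VC or $\varepsilon$-net machinery, because $(t-x)\mathbb{I}(x\leq t)=\int_0^t\mathbb{I}(x\leq u)\,du$ gives the exact identity $G_{RSS}(t)=\int_0^t F_{RSS}(u)\,du$, whence $\sup_{\tau\leq t\leq T}|G_{RSS}(t)-G(t)|\leq T\sup_u|F_{RSS}(u)-F(u)|$; this is precisely the identity the paper itself exploits in the proof of its second (weak convergence) theorem. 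Both your argument and the paper's rely on the implicit assumption $F(\tau)>0$, without which $K(t)$ is not defined on all of $[\tau,T]$.
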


\begin{proof} We have
\begin{eqnarray*}
K_{RSS}-K(t) &=&\frac{1}{k}\sum_{r=1}^k \dfrac{\hat{K}_{[r]}(t)\hat{F}_{[r]}(t)}{F_{RSS}(t)}-\frac{1}{k}\sum_{r=1}^k \dfrac{K_{[r]}(t)F_{[r]}(t)}{F(t)}\\
&=&\frac{1}{k}\sum_{r=1}^k \dfrac{\hat{F}_{[r]}(t)}{F_{RSS}(t)}\left(\hat{K}_{[r]}(t)-K_{[r]}(t)\right)+
\frac{1}{k}\sum_{r=1}^k \dfrac{\hat{F}_{[r]}(t)}{F_{RSS}(t)}K_{[r]}(t)-\frac{1}{k}\sum_{r=1}^k \dfrac{F_{[r]}(t)}{F_{RSS}(t)}K_{[r]}(t)\\
&&+\frac{1}{k}\sum_{r=1}^k \dfrac{F_{[r]}(t)}{F_{RSS}(t)}K_{[r]}(t)-\frac{1}{k}\sum_{r=1}^k\dfrac{F_{[r]}(t)}{F(t)}K_{[r]}(t)\\
&=&\frac{1}{k}\sum_{r=1}^k \dfrac{\hat{F}_{[r]}(t)}{F_{RSS}(t)}\left(\hat{K}_{[r]}(t)-K_{[r]}(t)\right)+
\dfrac{1}{kF_{RSS}(t)}\sum_{r=1}^k K_{[r]}(t)\left(\hat{F}_{[r]}(t)-F_{[r]}(t)\right)\\
&&+\left( \dfrac{1}{F_{RSS}(t)}-\dfrac{1}{F(t)}    \right) \frac{1}{k}\sum_{r=1}^k K_{[r]}(t)F_{[r]}(t),\\
\end{eqnarray*}
where $\hat{K}_{[r]}(t)$ and $\hat{F}_{[r]}(t)$ are empirical estimates of $K_{[r]}(t)=\mathbb{E}\left(t-X_{[r]}|X_{[r]}\leq t\right)$ and $F_{[r]}(t)$, respectively. Thus,  we get
 \begin{eqnarray}
 \label{alms}
 \left\vert K_{RSS}(t)-K(t)\right\vert &\leq & \max_{1\leq r\leq k} \sup _{\tau\leq t \leq T}\left\vert\hat{K}_{[r]}(t)-K_{[r]}(t)\right\vert \nonumber\\
 &&+ \max_{1\leq r\leq k} \sup _{\tau \leq t }\left\vert\hat{F}_{[r]}(t)-F_{[r]}(t)\right\vert \dfrac{1}{kF_{RSS}(t)}\sum_{r=1}^k K_{[r]}(t)  \nonumber\\
&&+\sup _{\tau \leq t } \left\vert \dfrac{1}{F_{RSS}(t)}-\dfrac{1}{F(t)}    \right \vert \frac{1}{k}\sum_{r=1}^k K_{[r]}(t)F_{[r]}(t).
 \end{eqnarray}
Since $\dfrac{1}{kF_{RSS}(t)}\sum_{r=1}^k K_{[r]}(t) $ and $ \frac{1}{k}\sum_{r=1}^k K_{[r]}(t)F_{[r]}(t)$ are both bounded on a finite interval, it follows from \cite{Parvardeh} that  the first and last term of summation (\ref{alms})  tends to zero as $m$ goes to infinity. Moreover,  one can conclude from Glivenko-Cantelli theorem that the second  term of summation (\ref{alms})  tends to zero as $m \rightarrow \infty$. Hence
\begin{equation*}
\lim_{m\rightarrow \infty }\sup_{\tau \leq t\leq T}\left\vert
K_{RSS}(t)-K(t)\right\vert =0,
\end{equation*}
which completes the proof.
\end{proof}

Now, to obtain asymptotic distribution of $K_{RSS}(t)$,  we should first note that 

\begin{equation*}
\left\{ \sqrt{m}\left( F_{RSS}(t)-F(t)\right) ,\text{ }t\geq 0\right\}
\rightarrow \frac{1}{k}\sum_{r=1}^{k}\mathbb{B}_{[r]}\text{\ \ }\ {\rm in}\text{ }D([\tau
,T])\text{\ },
\end{equation*}
for every $0<\tau<T<\infty$, where $\mathbb{B}_{[r]}$ is a  Gaussian process with mean zero and covariance function (\citealp{Shorack})
\begin{equation*}
\mathbb{E}(\mathbb{B}_{[r]}(t_{1})\mathbb{B}_{[r]}(t_{2}))=F_{[r]}(t_{1}\wedge
t_{2})-F_{[r]}(t_{1})F_{[r]}(t_{2}),\ \ \ t_{1},t_{2}\geq 0.
\end{equation*}

The following theorem provides the asymptotic distribution of $K_{RSS}$.

\begin{theorem}
Let $\left\lbrace X_{[r]j}; r=1,\ldots ,k; j=1,\ldots ,m \right\rbrace$ be a ranked set sample of size $n=mk$. Under consistent ranking process assumption, if the set size $k$ is fixed and the number of cycles ($m$) goes to infinity, then
\begin{equation*}
\left\{ \sqrt{n}\left( K_{RSS}-K(t)\right) ,\text{ }t\geq 0\right\}
\rightarrow Z\text{\ }\ {\rm in}\text{ }D([\tau ,T])\text{\ },
\end{equation*}

for every $0<\tau<T<\infty$, where $Z=\{Z(t),$ $t\geq 0\}$ a mean zero Gaussian process of the form
\begin{equation*}
Z^{RSS}(t)=\dfrac{1}{\sqrt{k}F(t)}\left[ \sum_{r=1}^{k}%
\int_{0}^{t}\mathbb{B}_{[r]}(x)dx-K(t)\mathbb{B}_{[r]}(t)\right] \text{ \ \ \ }t\geq 0,
\end{equation*}%
with variance function
\begin{equation*}
\mathbb{V}(Z^{RSS}(t))=\dfrac{1}{kF^{2}(t)}\sum_{r=1}^{k}\left( \sigma _{\lbrack
r]}^{2}(t)F_{[r]}(t)+F_{[r]}(t)\left(1-F_{[r]}(t)\right)\left[K_{[r]}(t)-K(t)\right]^{2}\right), \text{ \ \ \ \ }t\geq 0.
\end{equation*}
\end{theorem}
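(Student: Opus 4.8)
The plan is to write $K_{RSS}$ as a smooth functional of the ranked-set empirical process and then pass to the limit via the functional delta method. First I would put the estimator in ratio form: using the elementary identity $\frac1N\sum_i (t-x_i)\mathbb{I}(x_i\le t)=\int_0^t F_N(x)\,dx$, valid for any empirical distribution function $F_N$, one has $K_{RSS}(t)=G_{RSS}(t)/F_{RSS}(t)$ on the event $\{V_1+\cdots+V_k>0\}$, where $G_{RSS}(t):=\frac{1}{mk}\sum_{r=1}^{k}\sum_{j=1}^{m}(t-X_{[r]j})\mathbb{I}(X_{[r]j}\le t)=\int_0^t F_{RSS}(x)\,dx$; likewise, by $(\ref{Kt})$, $K(t)=N(t)/F(t)$ with $N(t):=\int_0^t F(x)\,dx$. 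For $t\in[\tau,T]$ we have $V_1+\cdots+V_k=mk\,F_{RSS}(t)\ge mk\,F_{RSS}(\tau)$, and $\mathbb{P}(F_{RSS}(\tau)>0)\to 1$ by Glivenko--Cantelli since $F(\tau)>0$; hence the truncation indicator equals $1$ uniformly on $[\tau,T]$ with probability tending to $1$ and does not affect the weak limit.

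Next I would upgrade the stated convergence $\sqrt{m}(F_{RSS}-F)\Rightarrow \mathbb{G}:=\frac1k\sum_{r=1}^{k}\mathbb{B}_{[r]}$ to hold in $D([0,T])$ rather than merely $D([\tau,T])$; this is legitimate because $F_{RSS}=\frac1k\sum_{r=1}^{k}\hat F_{[r]}$ is an average of $k$ \emph{independent} classical empirical processes built from the i.i.d.\ samples $X_{[r]1},\dots,X_{[r]m}\sim F_{[r]}$, so each $\sqrt{m}(\hat F_{[r]}-F_{[r]})\Rightarrow\mathbb{B}_{[r]}$ in $D([0,T])$ and the limits are mutually independent. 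Consider the map $\Phi$ sending $G\in D([0,T])$ to the function $t\mapsto \int_0^t G(x)\,dx\,\big/\,G(t)$ on $[\tau,T]$. Since integration is a continuous linear operator from $D([0,T])$ into $C([0,T])$ and $F(t)\ge F(\tau)>0$ on $[\tau,T]$, the map $\Phi$ is Hadamard differentiable at $F$ with derivative $\Phi'_F(h)(t)=\frac{1}{F(t)}\bigl(\int_0^t h(x)\,dx-K(t)h(t)\bigr)$. The functional delta method then yields $\sqrt{m}\bigl(K_{RSS}-K\bigr)=\sqrt{m}\bigl(\Phi(F_{RSS})-\Phi(F)\bigr)\Rightarrow \Phi'_F(\mathbb{G})$ in $D([\tau,T])$; multiplying by $\sqrt{k}$ (recall $n=mk$) and substituting $\mathbb{G}=\frac1k\sum_r\mathbb{B}_{[r]}$ turns the limit into precisely $Z^{RSS}$. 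If one prefers to avoid abstract differentiability, the same conclusion follows by writing $\sqrt{m}(K_{RSS}-K)=\{F_{RSS}F\}^{-1}\bigl[F\cdot\sqrt{m}(G_{RSS}-N)-N\cdot\sqrt{m}(F_{RSS}-F)\bigr]$, noting that $\sqrt m(G_{RSS}-N)(t)=\int_0^t\sqrt m(F_{RSS}-F)(x)\,dx$ converges jointly with $\sqrt m(F_{RSS}-F)$ by continuity of integration and the continuous mapping theorem, and invoking Slutsky together with the uniform convergence $F_{RSS}\to F$ on $[\tau,T]$. Gaussianity and mean zero of $Z^{RSS}$ are then automatic, as it is a continuous linear image of the jointly Gaussian vector $(\mathbb{B}_{[1]},\dots,\mathbb{B}_{[k]})$.

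Finally I would compute the variance. By independence of the $\mathbb{B}_{[r]}$ across $r$,
\begin{equation*}
\mathbb{V}\bigl(Z^{RSS}(t)\bigr)=\frac{1}{kF^2(t)}\sum_{r=1}^{k}\mathbb{V}\!\left(\int_0^t\mathbb{B}_{[r]}(x)\,dx-K(t)\mathbb{B}_{[r]}(t)\right).
\end{equation*}
From the covariance kernel $\mathbb{E}(\mathbb{B}_{[r]}(x)\mathbb{B}_{[r]}(y))=F_{[r]}(x\wedge y)-F_{[r]}(x)F_{[r]}(y)$, Fubini's theorem and one integration by parts give $\mathbb{V}\bigl(\int_0^t\mathbb{B}_{[r]}(x)\,dx\bigr)=2\int_0^t(t-x)F_{[r]}(x)\,dx-\bigl(\int_0^tF_{[r]}(x)\,dx\bigr)^2$, together with $\mathrm{Cov}\bigl(\int_0^t\mathbb{B}_{[r]}(x)\,dx,\mathbb{B}_{[r]}(t)\bigr)=(1-F_{[r]}(t))\int_0^tF_{[r]}(x)\,dx$ and $\mathbb{V}(\mathbb{B}_{[r]}(t))=F_{[r]}(t)(1-F_{[r]}(t))$. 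Substituting the identities $\int_0^tF_{[r]}(x)\,dx=K_{[r]}(t)F_{[r]}(t)$ and $2\int_0^t(t-x)F_{[r]}(x)\,dx=\sigma^2_{[r]}(t)F_{[r]}(t)+K_{[r]}^2(t)F_{[r]}(t)$ and collecting terms reduces each summand to $\sigma^2_{[r]}(t)F_{[r]}(t)+F_{[r]}(t)(1-F_{[r]}(t))\bigl[K_{[r]}(t)-K(t)\bigr]^2$, which is the asserted expression.

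The main obstacle is the middle step: one must make sure the empirical-process convergence is available on all of $[0,T]$ (so the numerator $\int_0^t(\cdot)$ is controlled), that numerator and denominator converge \emph{jointly}, and that the ratio functional is Hadamard differentiable tangentially to the subspace carrying the limit. The remaining ingredients---dropping the truncation indicator, the Glivenko--Cantelli uniform convergence, and the integration-by-parts variance identity---are routine.
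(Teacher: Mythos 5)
Your proposal is correct, and at its core it is the same argument as the paper's. The paper proves the theorem exactly via what you offer as the ``alternative'' route: it writes $K_{RSS}(t)-K(t)=\frac{1}{kF_{RSS}(t)}\sum_{r=1}^{k}\int_{0}^{t}\bigl(\hat F_{[r]}(x)-F_{[r]}(x)\bigr)dx+\frac{K(t)}{F_{RSS}(t)}\bigl(F(t)-F_{RSS}(t)\bigr)$, scales by $\sqrt{n}=\sqrt{km}$, substitutes the per-rank Donsker limits $\sqrt{m}(\hat F_{[r]}-F_{[r]})\Rightarrow\mathbb{B}_{[r]}$ together with $F_{RSS}\to F$, and then computes $\mathbb{V}(Z^{RSS}(t))$ from the covariance kernel with the same identities $\int_{0}^{t}F_{[r]}=K_{[r]}F_{[r]}$ and $2\int_{0}^{t}(t-x)F_{[r]}(x)dx=(\sigma^{2}_{[r]}+K^{2}_{[r]})F_{[r]}$ that you use; your variance reduction is identical in substance. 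Where you differ is presentation and rigor: your primary framing via Hadamard differentiability of the ratio map $G\mapsto\int_{0}^{\cdot}G/G(\cdot)$ and the functional delta method is a formalization of the passage to the limit that the paper carries out only informally (``the corresponding limiting process is obtained as\ldots''), and you additionally attend to points the paper passes over silently --- disposing of the truncation indicator $\mathbb{I}(V_{1}+\cdots+V_{k}>0)$ on $[\tau,T]$, upgrading the empirical-process convergence to $D([0,T])$ so the integral term is controlled, and securing joint convergence of numerator and denominator before applying Slutsky. So the delta-method packaging buys a cleaner justification of exactly the steps the paper asserts, while the paper's direct decomposition (your alternative) keeps the computation elementary; nothing in your sketch conflicts with the published argument.
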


\begin{proof}
Note that
\begin{eqnarray*}
K_{RSS}(t)-K(t) &=&\dfrac{1}{F_{RSS}(t)}\int_{0}^{t}\frac{1}{k}%
\sum_{r=1}^{k}\hat{F}_{[r]}(x)dx-\dfrac{1}{F(t)}\int_{0}^{t}\frac{1}{k}%
\sum_{r=1}^{k}F_{[r]}(x)dx \\
&=&\frac{1}{k}\dfrac{1}{F_{RSS}(t)}\sum_{r=1}^{k}\int_{0}^{t}\left( \hat{%
F}_{[r]}(x)-F_{[r]}(x)\right) dx \\
&&+\frac{1}{k}\dfrac{1}{F_{RSS}(t)}\sum_{r=1}^{k}%
\int_{0}^{t}F_{[r]}(x)dx-\frac{1}{k}\dfrac{1}{F(t)}\sum_{r=1}^{k}%
\int_{0}^{t}F_{[r]}(x)dx \\
&=&\frac{1}{k}\dfrac{1}{F_{RSS}(t)}\sum_{r=1}^{k}\int_{0}^{t}\left( \hat{%
F}_{[r]}(x)-F_{[r]}(x)\right) dx \\
&&+\frac{1}{k}\left( \dfrac{1}{F_{RSS}(t)}-\dfrac{1}{F(t)}\right)
\sum_{i=1}^{k}\int_{0}^{t}F_{[r]}(x)dx \\
&=&\frac{1}{k}\dfrac{1}{F_{RSS}(t)}\sum_{r=1}^{k}\int_{0}^{t}\left( \hat{%
F}_{[r]}(x)-F_{[r]}(x)\right) dx \\
&&+K(t)F(t)\left( \dfrac{1}{F_{RSS}(t)}-\dfrac{1}{F(t)}\right) .
\end{eqnarray*}%
Therefore, $\left\lbrace Z_{n}^{RSS}(t)=\sqrt{n}\left( K_{RSS}(t)-K(t)\right), t\geq 0 \right\rbrace$ can be written as
\begin{align*}
Z_{n}^{RSS}(t)& =\frac{1}{k}\dfrac{1}{F_{RSS}(t)}\sum_{r=1}^{k} \int_{0}^{t}\sqrt{n}\left(\hat{F}_{[r]}(x)-F_{[r]}(x)\right) dx \\ & +\dfrac{K(t)}{F_{RSS}(t)}\sqrt{n}\left(F(t)-F_{RSS}(t)\right) .
\end{align*}

Hence, the corresponding limiting process $Z^{RSS}$ is obtained as

\begin{align*}
Z^{RSS}(t)& =\frac{1}{\sqrt{k}}\dfrac{1}{F(t)}\sum_{r=1}^{k}%
\int_{0}^{t}\mathbb{B}_{[r]}(x)dx-\dfrac{K(t)}{\sqrt{k}F(t)}\sum_{r=1}^{k}\mathbb{B}_{[r]}(t)
\\
& =\frac{1}{\sqrt{k}}\dfrac{1}{F(t)}\left[ \sum_{r=1}^{k}%
\int_{0}^{t}\mathbb{B}_{[r]}(x)dx-K(t)\mathbb{B}_{[r]}(t)\right] ,\text{ \ \ \ }t\geq 0.
\end{align*}

The variance function of $Z^{RSS}$ is given by

\begin{eqnarray*}
\mathbb{V} (Z^{RSS}(t))&=& \frac{1}{k}\dfrac{1}{F^2(t)}Cov\left(\sum_{r=1}^k
\int_{0}^t \mathbb{B}_{[r]}(x)dx-K(t)\mathbb{B}_{[r]}(t),\sum_{r=1}^k \int_{0}^t
B_{[r]}(y)dy-K(t)B_{[r]}(t)\right) \\
&=&\sum_{r=1}^k \int_{0}^t \int_{0}^t F_{[r]}(x\wedge
y)-F_{[r]}(x)F_{[r]}(y)dxdy-2K(t)\sum_{r=1}^k\int_{0}^t F_{[r]}(x\wedge
t)-F_{[r]}(x)F_{[r]}(t)dx \\
&&+K^2(t)\sum_{r=1}^k F_{[r]}(t)(1-F_{[r]}(t)) \\
&=&\sum_{r=1}^k \int_{0}^t \int_{0}^t F_{[r]}(x\wedge
y)-F_{[r]}(x)F_{[r]}(y)dxdy-2K(t)\sum_{r=1}^k
(1-F_{[r]}(t))K_{[r]}(t)F_{[r]}(t) \\
&&+K^2(t)\sum_{r=1}^k F_{[r]}(t)(1-F_{[r]}(t)) \\
&=&\dfrac{1}{kF^2(t)}\left(\sum_{r=1}^k\left(\left(\sigma^2_{[r]}(t)+K_{[r]}^2(t)%
\right)F_{[r]}(t)-K^2_{[r]}(t)F^2_{[r]}(t)\right)-2K(t)\sum_{r=1}^k
(1-F_{[r]}(t))K_{[r]}(t)F_{[r]}(t)\right. \\
&&\left.+K^2(t)\sum_{r=1}^k F_{[r]}(t)(1-F_{[r]}(t))\right) \\
&=&\dfrac{1}{kF^2(t)}\sum_{r=1}^k \left(
\sigma^2_{[r]}(t)F_{[r]}(t)+F_{[r]}(t)(1-F_{[r]}(t))[K_{[r]}(t)-K(t)]^2 \right).
\end{eqnarray*}
\end{proof}

\section{Comparisons}\label{Sec 3} 
We now compare the performance of the introduced estimator with its competitor in SRS. The comparison is done based on the following imperfect ranking models:
\begin{itemize}
\item \textit{Fraction of random ranking model}: In this model, we assume that with probability $p$ the rank of sample unit with judgement rank $r$ is identified correctly and with probability of $\left(1-p\right)$ is identified randomly. Hence, the CDF of $X_{[r]j}$ in this model is a mixture given by $F_{[r]}=pF_{(r)}+(1-p)F$, where $p\in [0,1]$.

\item \textit{Fraction of neighbor ranking model}: Here we assume that with probability $p$ the rank of sample unit with judgement rank $r$ is identified correctly and with probability of $\frac{\left(1-p\right)}{2}$ is confused with one of its adjacents. Thus, the CDF of $X_{[r]j}$ in this model is a mixture given by $F_{[r]}=\frac{(1-p)}{2}F_{(r-1)}+pF_{(r)}+\frac{(1-p)}{2}F_{(r+1)},$ where $p\in [0,1]$, $%
F_{(0)}=F_{(1)}$ and $F_{(k)}=F_{(k+1)}$.
\end{itemize}

 Note that  the fraction of neighbour ranking model often leads to less severe ranking error than the random ranking model, but it is more conceivable to happen in practice if the ranking is done using personal judgement of an expert.

The comparison is done for three distributions with different MPL curve shapes: Standard exponential distribution (Exp(1)), Weibull distribution with shape parameter 4 and scale parameter 3 (Weibull(4,3)) and Rescaled beta distribution with probability density function $\frac{3}{2}(1-x)^2I_{(0,2)}(x)$ (Rbeta(1,3)). The MPL curves of these three
distributions are depicted in Figure \ref{Ktt}.

\begin{figure}[h!]
\captionsetup{font=scriptsize}  \centering
\includegraphics[scale=0.6]{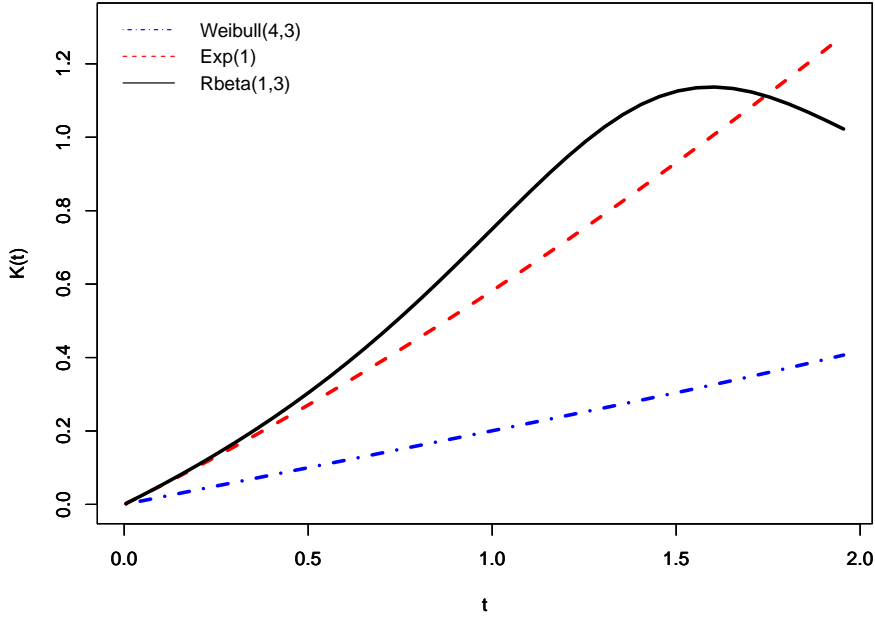}
\caption{ Exact values of $K(t)$  for three parent
distributions Weibull(4,3), Exp(1) and Rbeta(1,3) (represented by dotdash,
dashed and soild, respectively). }
\label{Ktt}
\end{figure}

\subsection{Finite sample size comparisons}
Here, the performance of MPL estimators in RSS and SRS are compared for finite sample sizes using Monte Carlo simulation. To this end, we first set $n \in \lbrace 15, 30, 90 \rbrace$, $k \in \lbrace 3, 5 \rbrace$, and for each combination of $\left(n,k\right)$, we have generated 1,000,000 random samples from both RSS and SRS designs. In order to control ranking quality, the values of $p$ in the imperfect ranking models  are selected from the set $p \in \lbrace 0.2,0.5, 0.8, 1\rbrace$. We have defined relative efficiency (RE) of $K_{RSS}(t)$ to $K_{SRS}(t)$  as the ratio of their mean square errors, i.e., $%
RE\left(t\right)=\frac{MSE%
\left(K_{SRS}(t) \right)}{MSE\left( K_{RSS}(t)\right)}$, and estimated it based on 1,000,000
repetitions for $t \in \lbrace Q_{0.05}, \ldots, Q_{0.95}\rbrace$, where $Q_q
$ is the $q$th quantile of the parent distribution. Note that $RE\left(t\right)>1$ indicates the  preference of $K_{RSS}(t)$ over $K_{SRS}(t)$.

Here, we only report the simulation results for settings with $n=15$ and $k \in \lbrace 3,5 \rbrace$ in Figures \ref{fig:2} and \ref{fig:3} for fraction of random and neighbor ranking models, respectively. This is so because our simulation results show that the $RE$ values are not much affected by sample size $n$ when the other parameters are kept fixed (see the simulation results for $n \in \lbrace 15, 30, 90 \rbrace$ and $k \in \lbrace 3,5 \rbrace$ in Figures S1-S3 for fraction of random ranking model and in Figures S4-S6 for fraction of neighbor ranking model in the Supplementary Materials).

\begin{figure}[h!]
\captionsetup{font=scriptsize}  \centering
\includegraphics[scale=0.72]{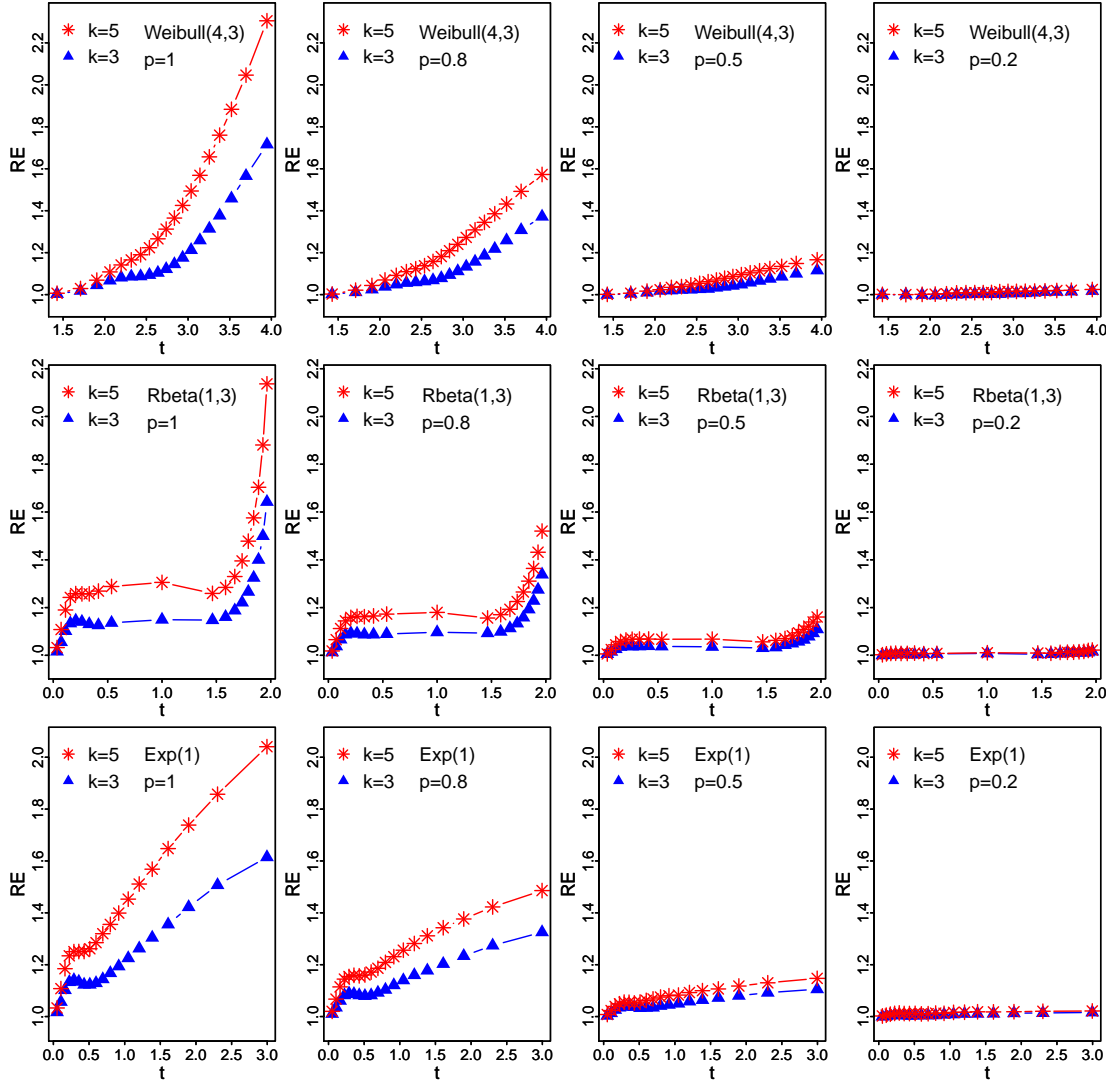}
\caption{ Simulated $RE(t)$  for  $k \in \left\lbrace 3, 5 \right\rbrace$
(represented by $\blacktriangle$, $\ast$ respectively), $n=15$ and $p \in
\left\lbrace 1, 0.8, 0.5, 0.2 \right\rbrace$ under fraction of random
ranking model when population distribution is Weibull(4,3), Rbeta(1,3) and
Exp(1). 
}
\label{fig:2}
\end{figure}

Figure \ref{fig:2} presents the simulation results under fraction of random ranking model. We observe from this figure that although the $RE$ has different patterns for different distributions, it never falls below one which indicates superiority of $K_{RSS}(t)$ to $K_{SRS}(t)$. Furthermore, the relative efficiency increases as the value of set size ($k$) increases while the other parameters are kept fixed, and the efficiency gain is considerable if the quality of ranking is fairly good ($p \geq 0.8$). As one
expects, the relative efficiency decreases as the value of $p$ decreases and $RE$ reaches to almost one when $p=0.2$. This can be justified by the fact that statistical properties of $K_{RSS}(t)$ under fraction of random ranking model with $p=0$ coincides to $K_{SRS}(t)$.

\begin{figure}[h!]
\captionsetup{font=scriptsize}  \centering
\includegraphics[scale=0.72]{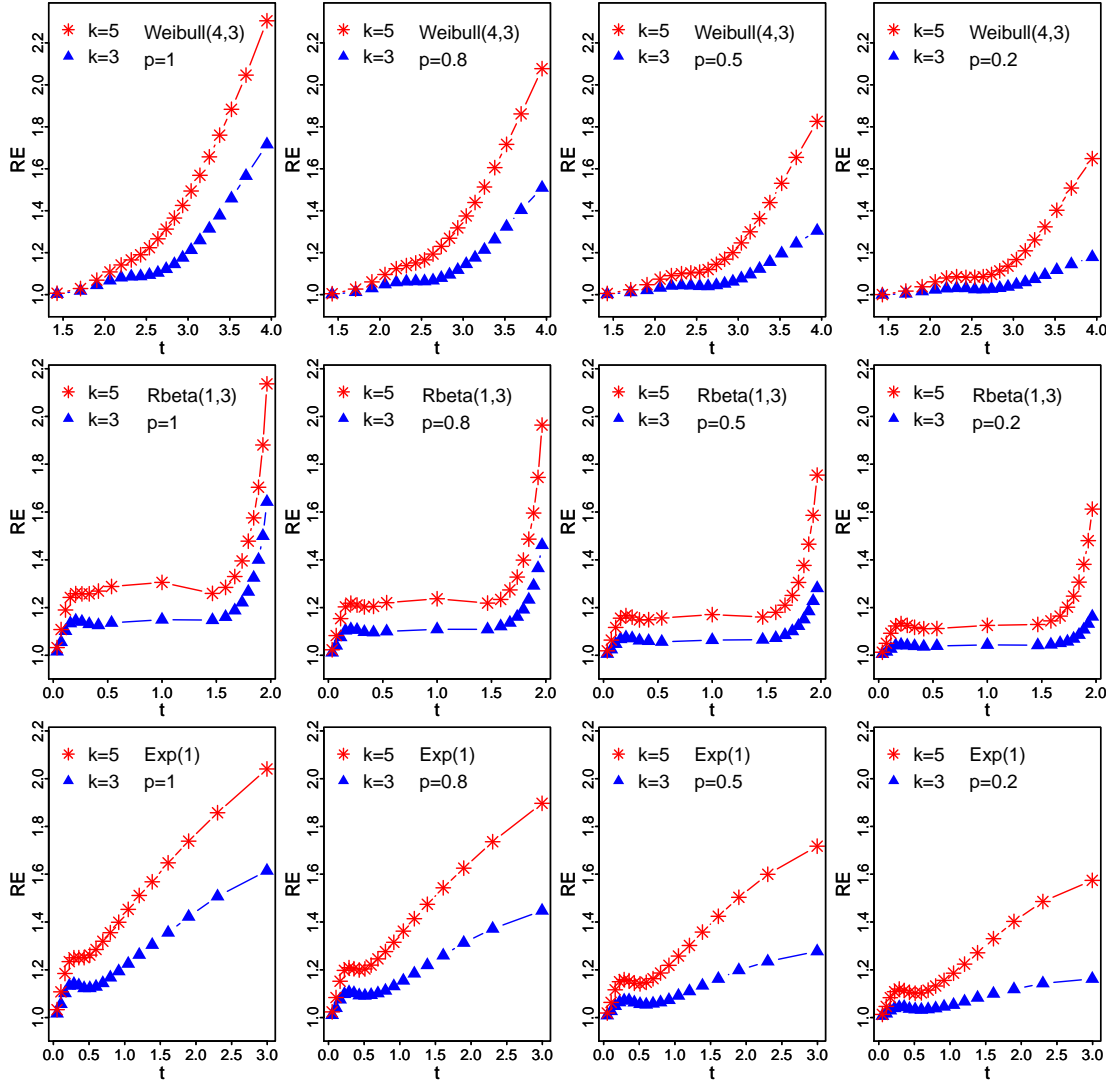}
\caption{ Simulated $RE(t)$   for  $k \in \left\lbrace 3, 5 \right\rbrace$
(represented by $\blacktriangle$, $\ast$ respectively), $n=15$ and $p \in
\left\lbrace 1, 0.8, 0.5, 0.2 \right\rbrace$ under fraction of neighbor
ranking model when population distribution is Weibull(4,3), Rbeta(1,3) and
Exp(1). 
}
\label{fig:3}
\end{figure}

Simulation results under fraction of neighbor ranking model is given in Figure \ref{fig:3}. The pattern of the performance of the estimators in Figure \ref{fig:3} is similar to that of Figure \ref{fig:2}, with an obvious  difference that the $RE$ values in Figure \ref{fig:3} are higher than what we observed in Figure \ref{fig:2}, specially for small values of $p$. This can be justified by the fact that fraction of neighbor ranking model leads to less serve ranking error than the random ranking model.

\subsection{Asymptotic comparisons}

In the SRS design, \cite{Parvardeh} showed that the $\sqrt{n}(K_{SRS}(t)-K(t))$ goes to a  Gaussian process with mean zero and variance function $\sigma^2_{SRS}(t)=\sigma^{2}(t)/F(t)$ as $n$ goes to infinity where $$\sigma^2(t)=\mathbb{V}(t-X|X<t)=\dfrac{2\int_{0}^t(t-x)F(x)dx}{F(t)}-K^2(t). $$
We define the asymptotic relative efficiency ($ARE$)  as the ratio of  asymptotic variance of $K_{SRS}(t)$ to asymptotic variance of  $K_{RSS}(t)$, i.e.,
\begin{equation*}
ARE\left(t\right)=\frac{\sigma^2_{SRS}(t)}{\sigma^2_{RSS}(t)}.
\end{equation*}

In the next theorem, we show that $ARE\left(t\right)$ never falls below one. 

\begin{theorem}
Let $\lbrace X_{[r]j}; r=1,\ldots,k; j=1, \ldots, m \rbrace$ be ranked set sample of size $n=mk$.	  If the ranking process is consistent,   then $ARE(t) \geq 1.$
\end{theorem}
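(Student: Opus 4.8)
The plan is to prove the equivalent statement $\sigma^2_{SRS}(t)\ge\sigma^2_{RSS}(t)$, which gives $ARE(t)\ge1$ since both asymptotic variances are positive on $[\tau,T]$. The whole argument rests on rewriting $\sigma^2_{SRS}(t)=\sigma^2(t)/F(t)$ in a form that can be compared term by term with the variance function of $\sigma^2_{RSS}(t)$ derived in Theorem~2.

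First I would establish the decomposition
\begin{equation*}
\sigma^2(t)\,F(t)=\frac1k\sum_{r=1}^k F_{[r]}(t)\,\sigma^2_{[r]}(t)+\frac1k\sum_{r=1}^k F_{[r]}(t)\bigl(K_{[r]}(t)-K(t)\bigr)^2 .
\end{equation*}
There are two natural routes. Probabilistically, consistency of the ranking gives $F=\frac1k\sum_{r=1}^k F_{[r]}$, so $X$ has the law of the mixture obtained by drawing $R$ uniformly on $\{1,\ldots,k\}$ and then $X\mid\{R=r\}\sim F_{[r]}$. Conditioning on $\{X\le t\}$ turns the mixing weights into $F_{[r]}(t)/(kF(t))$, which sum to one by consistency, and the law of total variance applied to $t-X$ produces exactly the displayed identity, with $K(t)=\sum_{r}\tfrac{F_{[r]}(t)}{kF(t)}K_{[r]}(t)$ in the role of the overall conditional mean. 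Alternatively, one substitutes $F(x)=\frac1k\sum_r F_{[r]}(x)$ into $\sigma^2(t)=2\int_0^t(t-x)F(x)\,dx/F(t)-K^2(t)$ and uses the two elementary identities $\frac1k\sum_r\int_0^t F_{[r]}(x)\,dx=\int_0^t F(x)\,dx=K(t)F(t)$ and $\frac1k\sum_r F_{[r]}(t)K_{[r]}(t)=\int_0^t F(x)\,dx$; the cross term then collapses precisely to the $\bigl(K_{[r]}(t)-K(t)\bigr)^2$ weights.

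Dividing the identity by $F^2(t)$ yields
\begin{equation*}
\sigma^2_{SRS}(t)=\frac{1}{kF^2(t)}\sum_{r=1}^k\Bigl(F_{[r]}(t)\,\sigma^2_{[r]}(t)+F_{[r]}(t)\bigl(K_{[r]}(t)-K(t)\bigr)^2\Bigr),
\end{equation*}
and subtracting the expression for $\sigma^2_{RSS}(t)$ from Theorem~2 makes the $\sigma^2_{[r]}$ contributions cancel and leaves
\begin{equation*}
\sigma^2_{SRS}(t)-\sigma^2_{RSS}(t)=\frac{1}{kF^2(t)}\sum_{r=1}^k F_{[r]}^2(t)\bigl(K_{[r]}(t)-K(t)\bigr)^2\ \ge\ 0 ,
\end{equation*}
each summand being a product of squares. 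Hence $ARE(t)\ge1$, with equality at $t$ exactly when $K_{[r]}(t)=K(t)$ for every $r$ with $F_{[r]}(t)>0$. The only real obstacle is getting the variance decomposition in the first display correct; once it is in hand the comparison is a one-line cancellation, and the total-variance viewpoint is the safest way to avoid sign slips in the algebra.
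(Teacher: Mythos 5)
Your proposal is correct and follows essentially the same route as the paper: both rest on the decomposition $k\sigma^2(t)F(t)=\sum_{r=1}^{k}F_{[r]}(t)\sigma_{[r]}^2(t)+\sum_{r=1}^{k}F_{[r]}(t)\bigl(K_{[r]}(t)-K(t)\bigr)^2$ (which the paper obtains by the algebraic substitution you describe as your second route) and then compare with the expression for $\sigma^2_{RSS}(t)$ from Theorem 2, the only difference being the factor $1-F_{[r]}(t)$. Your explicit computation of the difference as $\frac{1}{kF^{2}(t)}\sum_{r=1}^{k}F_{[r]}^{2}(t)\bigl(K_{[r]}(t)-K(t)\bigr)^{2}\geq 0$, and the law-of-total-variance derivation of the key identity, are just cleaner phrasings of the paper's inequality step.
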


\begin{proof}
Using the fundamental identity $F\left( t\right) =\sum_{r=1}^{k}F_{[r]}(t)/k$, we have

\begin{align*}
k\sigma^2(t)&=k\left( \dfrac{\int_{0}^t2(t-x)F(x)dx}{F(t)}-K^2(t)\right) \\
&=k\left(\frac{1}{k}\sum_{r=1}^{k}\dfrac{F_{[r]}(t)}{F(t)}\int_{0}^t 2(t-x)%
\dfrac{F_{[r]}(x)}{F_{[r]}(t)}dx -K^2(t)\right) \\
&=\sum_{r=1}^{k}\left( \frac{F_{[r]}(t)}{F(t)}\left(\int_{0}^t2(t-x)\dfrac{%
F_{[r]}(x)}{F_{[r]}(t)}dx-K^2_{[r]}(t)\right) +\dfrac{F_{[r]}(t)}{F(t)}%
K^2_{[r]}(t)\right) - k K^2(t) \\
&=\sum_{r=1}^{k}\left(\frac{K_{[r]}(t)}{F(t)}\sigma_{[r]}^2(t)+\dfrac{K_{[r]}(t)}{%
K(t)}K^2_{[r]}(t)\right) - k K^2(t) \\
&=\sum_{r=1}^{k}\frac{F_{[r]}(t)}{F(t)}\sigma_{[r]}^2(t)+\sum_{r=1}^{k}
\dfrac{K_{[r]}(t)}{K(t)}K^2_{[r]}(t)- K(t)\sum_{r=1}^k\frac{F_{[r]}(t)}{F(t)}%
M_{[r]}(t) \\
&=\sum_{r=1}^{k}\frac{F_{[r]}(t)}{F(t)}\sigma_{[r]}^2(t)+\sum_{r=1}^k\frac{%
F_{[r]}(t)}{F(t)}K_{[r]}(t)\left[K_{[r]}(t)-K(t)\right] \\
&=\sum_{r=1}^{k}\frac{F_{[r]}(t)}{F(t)}\sigma_{[r]}^2(t)+\sum_{r=1}^k\frac{%
F_{[r]}(t)}{F(t)}\left[K_{[r]}(t)-K(t)\right]^2. \\
\end{align*}
Therefore, it follows that\bigskip
\begin{align*}
k\sigma^2_{SRS}(t)F^2(t)&=k\sigma^2(t)F(t) \\
&=\sum_{r=1}^{k}F_{[r]}(t)\sigma_{[r]}^2(t)+%
\sum_{r=1}^kF_{[r]}(t)[K_{[r]}(t)-K(t)]^2 \\
&\geq\sum_{r=1}^{k}F_{[r]}(t)\sigma_{[r]}^2(t)+%
\sum_{r=1}^kF_{[r]}(t)(1-F_{[r]}(t))[K_{[r]}(t)-K(t)]^2 \\
&=k\mathbb{V}(F(t)Z_{RSS}(t)) \\
&=k\sigma^2_{RSS}(t)F^2(t),
\end{align*}
and this completes the proof.
\end{proof}

To see the amount of asymptotic efficiency gain obtained using $K_{RSS}$ instead of $K_{SRS}$, we have computed the $ARE\left(t\right)$ for $k \in \lbrace 3, 5\rbrace$ and three different distributions, under two above imperfect ranking models with $p \in \lbrace 0.2,0.5,0.8,1 \rbrace$. Here, for brevity,  we only report results for the perfect ranking case ($p=1$) in Figure \ref{fig:4} and refer the interested reader to see complete results in Figures S7-S8 in the Supplementary Material for fraction of random ranking and neighbour ranking models, respectively.

\begin{figure}[h!]
\captionsetup{font=scriptsize}  \centering
\includegraphics[width=15cm,height=11cm]{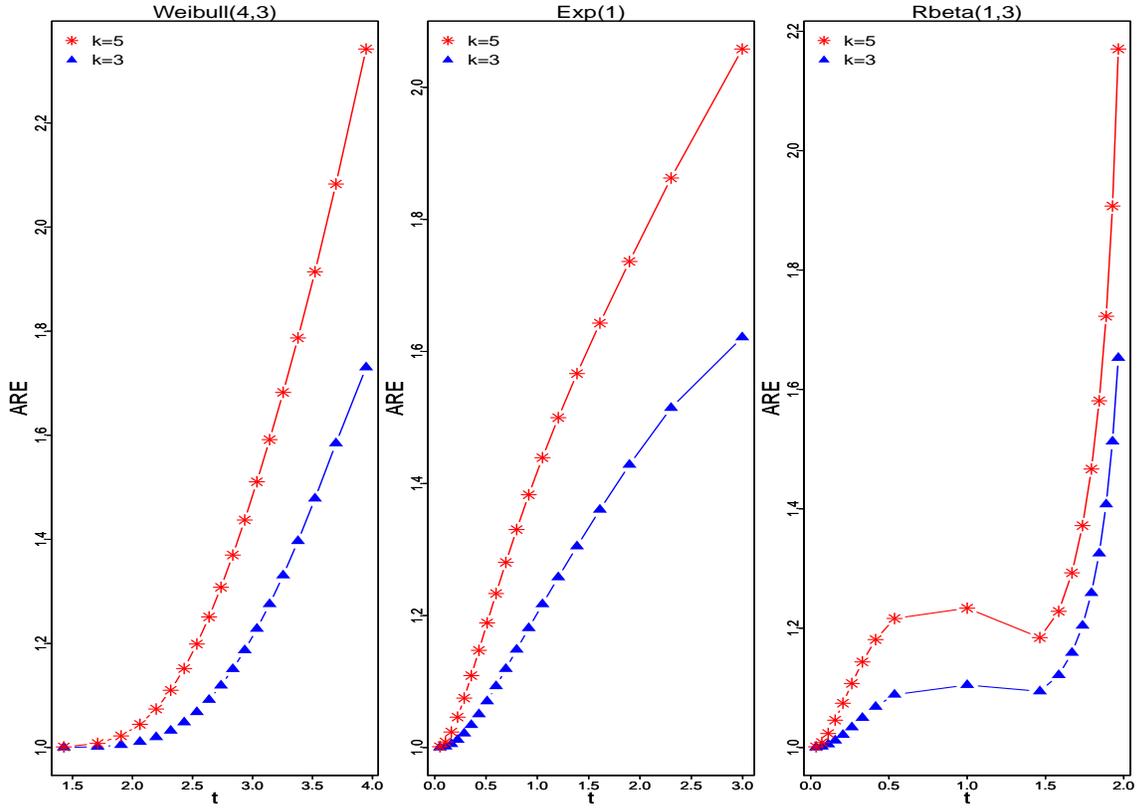}
\caption{ Exact values of ARE(t)  for $k \in \left\lbrace 3, 5 \right\rbrace$ (represented
by $\blacktriangle$, $\ast$, respectively) when the parent distributions are
Weibull(4,3), Exp(1) and Rbeta(1,3) under perfect ranking. }
\label{fig:4}
\end{figure}

Figure \ref{fig:4} presents the $ARE\left(t\right)$ for three different distributions under assumption of perfect ranking. We observe from this figure that the pattern of the performance of the estimator is very close to what we observed for finite sample sizes when the ranking is perfect ($p=1$). This is also true for imperfect ranking case (see Figures S7-S8 in the Supplementary Material). Thus, the $RE\left(t\right)$ values can be well approximated by $ARE\left(t\right)$ ones.

\section{Estimation of the gap between HIV transmission and diagnosis }\label{Sec 4}   
The human immunodeficiency virus (HIV) is a dangerous virus that weakens the body natural defence system against illness (immune system) by destroying a type of white blood cells in the body called CD4 cells. As more CD4 cells are destroyed by the HIV virus, the patient's immune system becomes weaker so he/she finds it harder to fight off infections. The late-stage of HIV is called acquired immune deficiency syndrome (AIDS) in which the body's immune system goes to severely weaker condition such that it cannot defend itself at all. Usually HIV leads to AIDS in 15 years if it does not treated appropriately.  According to the World Health Organization (WHO), more than $95\%$ of HIV infections happen in developing countries. This disease implies a catastrophe not only for the individuals and households affected, but also for the entire nation, as it is likely to lead to an intensification of poverty, push some non-poor into poverty and some of the very poor into destitution. 

 Many people in developing countries who suffer from HIV are not aware that they are infected by the virus. This is so because valid  tests for HIV detection may not be available for public in these countries, or they could be very expensive.   On the other hand, the unaware people of their HIV contribute much larger in HIV transmission. For example, \cite{Hall2012} estimated that around half of HIV transmissions are due to people who are unaware of their HIV infections. Therefore, it is crucial for both government and health organizations  to use an efficient method for estimating the gap between HIV infection and diagnosis.  Suppose that at time $t$, a person gets a medical test to check out about HIV and the test result is positive. Let the random variable $X$ be the infection time, then we know that $X \leq t$. Thus  $K\left(t\right)=\mathbb{E}(t-X|X\leq t)$ is the parameter of interest.

 Note that although obtaining time of HIV infection is difficult because most patients have an established infection of unknown duration at diagnosis, a medical expert can simply rank the sample units according to the infection time using his personal judgement, interviewing with test subjects or their health status. Thus, RSS scheme  can be regarded as an alternative for  SRS for estimating MPL of people living with HIV at the diagnosis time $t$. Assume that one is interested in estimating MPL function of people living with HIV at the diagnosis time $t$ based on an RSS sample. However, since an RSS sample of infection time of patients living with HIV is not available, we simulate an RSS sample with set size $5$ and cycle size $6$ from Gamma distribution with scale parameter 40 and shape parameter 8 under fraction of random ranking model with $p=0.8$ and round it to its nearest integer to represent the infection time of HIV in weeks. The data set is presented in Table \ref{tab1} and the estimated MPL function along with its $95\%$ normal approximation (NA) confidence interval (CI) is shown in Figure \ref{fig:5}, where the variance of $K_{RSS}(t)$ is estimated by replacing each of its components with its empirical counterpart.

\begin{table}[]
\caption{A simulated RSS data set for the infection time of HIV in weeks. }
\label{tab1}\centering
\begin{tabular}{|clclclclc|}
\hline
& Rank 1 & Rank 2 & Rank 3 & Rank 4 & Rank 5 &  &  &  \\ \hline
Cycle 1 & 192 & 277 & 293 & 329 & 376 &  &  &  \\
Cycle 2 & 158 & 256 & 236 & 411 & 508 &  &  &  \\
Cycle 3 & 137 & 280 & 451 & 2447 & 478 &  &  &  \\
Cycle 4 & 171 & 229 & 143 & 268 & 248 &  &  &  \\
Cycle 5 & 203 & 232 & 238 & 367 & 363 &  &  &  \\
Cycle 6 & 287 & 274 & 415 & 413 & 323 &  &  &  \\ \hline
\end{tabular}%
\end{table}

\begin{figure}[h!]
\captionsetup{font=scriptsize}  \centering
\includegraphics[width=10cm,height=10cm]{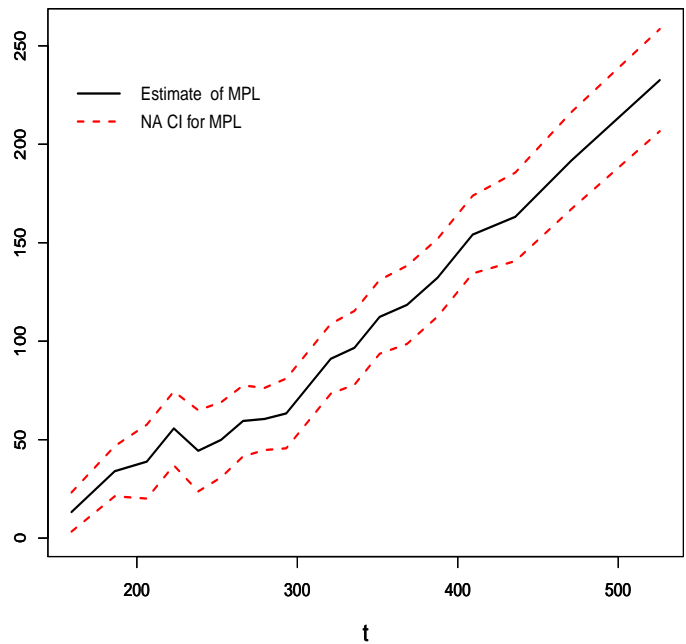}
\caption{ Estimation of MPL function using data set in Table \protect\ref%
{tab1} (represented by solid line) along with its $95\%$ normal
approximation (NA) confidence interval (represented by dash lines). This
figure appears in color in the electronic version of this paper. }
\label{fig:5}
\end{figure}

\section{Conclusion}\label{Sec 5}
The ranked set sampling (RSS) is a sampling plan which is designed to employ auxiliary ranking information for improving the estimation of the population parameters. The auxiliary ranking information can be obtained through eye inspection, subjective judgement, or a cheap concomitant variable. The ranking process is done before any actual measurements on the variable of interest, and leads to select more informative units to include in our sample for measurement. This approach often is used when it is easy and cheap to rank  units in a set without measuring their accurate values.

In this paper, we studied an empirical estimate of mean past lifetime (MPL) based on RSS. We showed that the  estimator is  strongly uniformly consistent estimator and we proved it converges to a Gaussian process under some mild conditions. We then compared the introduced estimator with the empirical estimator in simple random sampling (SRS). To this end, we considered  two imperfect ranking models, which lead to severe and mild ranking errors, four different degrees for ranking quality which move from perfect ranking to almost random ranking, eight  different combination of set size and sample size; and three distributions with different MPL shapes. According to our comparison results, RSS estimator of MPL is more efficient than the SRS one, and the efficiency gain of using  RSS estimator can be sizeable in some certain circumstances. Finally, we illustrated a potential application of the developed procedure in estimating the gap between HIV infection to the diagnosis.

{\small 

}
\end{document}